	\newtheorem{theorem}{Theorem}
	\newtheorem{lemma}{Lemma}
	\newcommand{\tr}{\dagger}
\newcommand{\eq}[1]{\begin{align}#1\end{align}}
\newcommand{\seq}[1]{\begin{subequations}#1\end{subequations}}
\newcommand{\bm}[1]{\begin{bmatrix}#1\end{bmatrix}}
\newcommand{\bit}[1]{\begin{itemize}#1\end{itemize}}
\newcommand{\E}{\mathbb{E}}
\newcommand{\R}{\mathbb{R}}
\newcommand{\cT}{\mathcal{T}}
\newcommand{\cP}{\mathcal{P}}
\newcommand{\cX}{\mathcal{X}}
\newcommand{\cU}{\mathcal{U}}
\newcommand{\cH}{\mathcal{H}}
\newcommand{\bV}{\mathbf{V}}
\newcommand{\bF}{\mathbf{F}}
\newcommand{\bT}{\mathbf{T}}
\newcommand{\bP}{\mathbf{P}}
\newcommand{\bS}{\mathbf{S}}
\newcommand{\bL}{\mathbf{L}}
\newcommand{\bI}{\mathbf{I}}
\newcommand{\bJ}{\mathbf{J}}
\newcommand{\bG}{\mathbf{G}}
\newcommand{\bSigma}{\mathbf{\Sigma}}
\newcommand{\bVb}{\bar{\mathbf{V}}}
\newcommand{\bD}{\mathbf{D}}
\newcommand{\bC}{\mathbf{C}}
\newcommand{\bA}{\mathbf{A}}
\newcommand{\bR}{\mathbf{R}}
\newcommand{\bB}{\mathbf{B}}
\newcommand{\bQ}{\mathbf{Q}}
\newcommand{\bLambda}{\mathbf{\Lambda}}
\newcommand{\bDelta}{\mathbf{\Delta}}
\newcommand{\bZero}{\mathbf{0}}
\newcommand{\bOne}{\mathbf{1}}
\newcommand{\btL}{\mathbf{\tilde{L}}}
\newcommand{\btD}{\mathbf{\tilde{D}}}
\newcommand{\btG}{\mathbf{\tilde{G}}}
\newcommand{\btC}{\mathbf{\tilde{C}}}
\newcommand{\btF}{\mathbf{\tilde{F}}}
\newcommand{\tm}{\tilde{m}}
\newcommand{\btM}{\mathbf{\tilde{M}}}
\newcommand{\bbB}{\mathbf{\bar{B}}}
\newcommand{\bbL}{\mathbf{\bar{L}}}
\newcommand{\bbA}{\mathbf{\bar{A}}}
\newcommand{\pushright}[1]{\ifmeasuring@ #1 \else\omit\hfill$\displaystyle#1$\fi\ignorespaces}
\newcommand{\pushleft}[1]{\ifmeasuring@ #1 \else\omit$\displaystyle#1$\hfill\fi\ignorespaces}
\newcounter{exercisenumber}
\begin{document}
	\onecolumn 
	%
	\title{Signaling equilibria for dynamic LQG games with asymmetric information  }
	%
	%
	%
	\author{Deepanshu~Vasal and Achilleas~Anastasopoulos
	\thanks{The authors are with the Department
	of Electrical Engineering and Computer Science, University of Michigan, Ann
	Arbor, MI, 48105 USA e-mail: { \{dvasal, anastas\} at umich.edu}}
	}

	\maketitle
\begin{abstract}
We consider a finite horizon dynamic game with two players who observe their types privately and take actions, which are publicly observed. Players' types evolve as independent, controlled linear Gaussian processes and players incur quadratic instantaneous costs. This forms a dynamic linear quadratic Gaussian (LQG) game with asymmetric information. 
We show that under certain conditions, players' strategies that are linear in their private types, together with Gaussian beliefs form a perfect Bayesian equilibrium (PBE) of the game. 
Furthermore, it is shown that this is a signaling equilibrium due to the fact that future beliefs on players' types are affected by the equilibrium strategies.
We provide a backward-forward algorithm to find the PBE. Each step of the backward algorithm reduces to solving an algebraic matrix equation for every possible realization of the state estimate covariance matrix.
The forward algorithm consists of Kalman filter recursions, where state estimate covariance matrices depend on equilibrium strategies.
\end{abstract}

\section{Introduction}

Linear quadratic Gaussian (LQG) team problems have been studied extensively under the framework of classical stochastic control with single controller and perfect recall~\cite[Ch.7]{KuVa86}. In such a system, the state evolves linearly and the controller makes a noisy observation of the state which is also linear in the state and noise. The controller incurs a quadratic instantaneous cost.  With all basic random variables being independent and Gaussian, the problem is modeled as a partially observed Markov decision process (POMDP). The belief state process under any control law happens to be Gaussian and thus can be sufficiently described by the corresponding mean and covariance processes, which can be updated by the Kalman filter equations. Moreover, the covariance can be computed offline and thus the mean (state estimate) is a sufficient statistic for control. Finally, due to the quadratic nature of the costs, the optimal control strategy is linear in the state. Thus, unlike most POMDP problems, the LQG stochastic control problem can be solved analytically and admits an easy-to-implement optimal strategy.   

LQG team problems have also been studied under non-classical information structure such as in multi-agent decentralized team problems where two controllers with different information sets minimize the same objective. Such systems with asymmetric information structure are of special interest today because of the emergence of large scale networks such as social or power networks, where there are multiple decision makers with local or partial information about the system. It is well known that for decentralized LQG team problems, linear control policies are not optimal in general~\cite{Wi68}. However there exist special information structures, such as partially nested~\cite{HoCh72} and stochastically nested~\cite{Yu09}, where linear control is shown to be optimal. Furthermore, due to their strong appeal for ease of implementation, linear strategies have been studied on their own for decentralized teams even at the possibility of being suboptimal (see~\cite{MaNa15} and references therein).
 
When controllers (or players) are strategic, the problem is classified as a dynamic game and an appropriate solution concept is some notion of equilibrium. 
When players have different information sets, such games are called games with asymmetric information. There are several notions of equilibrium for such games, including perfect Bayesian equilibrium (PBE), sequential equilibrium, trembling hand equilibrium \cite{OsRu94,FuTi91}. Each of these notions of equilibrium consists of a strategy and a belief profile of all players where the equilibrium strategies are optimal given the beliefs and the beliefs are derived from the equilibrium strategy profile and using Bayes' rule (whenever possible), with some equilibrium concepts requiring further refinements. Due to this circular argument of beliefs being consistent with strategies which are in turn optimal given the beliefs, finding such equilibria is a difficult task. To date, there is no known sequential decomposition methodology to find such equilibria for general dynamic games with asymmetric information.

Authors in~\cite{Ba78} studied a discrete-time dynamic LQG game with one step delayed sharing of observations. 
Authors in~\cite{NaGuLaBa14} studied a class of dynamic games with asymmetric information under the assumption that player's posterior beliefs about the system state conditioned on their common information are independent of the strategies used by the players in the past. 
Due to this independence of beliefs and past strategies, the authors of~\cite{NaGuLaBa14} were able to provide a backward recursive algorithm similar to dynamic programming to find Markov perfect equilibria~\cite{MaTi01} of a transformed game which are equivalently a class of Nash equilibria of the original game. The same authors specialized their results in~\cite{GuNaLaBa14} to find non-signaling equilibria of dynamic LQG games with asymmetric information. 

Recently, we considered a general class of dynamic games with asymmetric information and independent private types in \cite{VaAn16} and provided a sequential decomposition methodology to find a class of PBE of the game considered. In our model, beliefs depend on the players' strategies, so 
our methodology allows the possibility of finding signaling equilibria.
In this paper, we build on this methodology to find signaling equilibria for two-player dynamic LQG games with asymmetric information. We show that players' strategies that are linear in their private types in conjunction with consistent Gaussian beliefs form a PBE of the game. Our contributions are:
\bit{
\item[(a)] Under strategies that are linear in players' private types, we show that the belief updates are Gaussian and the corresponding mean and covariance are updated through Kalman filtering equations which depend on the players' strategies, unlike the case in classical stochastic control and the model considered in~\cite{GuNaLaBa14}. Thus there is signaling~\cite{Ho80,KrSo94}.

\item[(b)] We sequentially decompose the problem by specializing the forward-backward algorithm presented in~\cite{VaAn16} for the dynamic LQG model. The backward algorithm requires, at each step, solving a fixed point equation in `partial' strategies of the players for all possible beliefs. We show that in this setting, solving this fixed point equation reduces to solving a matrix algebraic equation for each realization of the state estimate covariance matrices. 

\item[(c)] The cost-to-go value functions are shown to be quadratic in the private type and state estimates, which together with quadratic instantaneous costs and mean updates being linear in the control action, implies that at every time $t$ player $i$ faces an optimization problem which is quadratic in her control. Thus linear control strategies are shown to satisfy the optimality conditions in~\cite{VaAn16}.

\item[(d)] For the special case of scalar actions, we provide sufficient algorithmic conditions for existence of a solution of the algebraic matrix equation. Finally, we present numerical results on the steady state solution for specific parameters of the problem.

}
  
The paper is structured as follows. In Section~\ref{sec:model}, we define the model. In Section~\ref{sec:prelim}, we introduce the solution concept and summarize the general methodology in~\cite{VaAn16}. In Section~\ref{sec:results}, we present our main results where we construct equilibrium strategies and belief through a forward-backward recursion. In Section~\ref{sec:disc} we discuss existence issues and present numerical steady state solutions. We conclude in Section~\ref{sec:concl}.

\subsection{Notation}
We use uppercase letters for random variables and lowercase for their realizations. We use bold upper case letters for matrices. For any variable, subscripts represent time indices and superscripts represent player identities. We use notation $ -i$ to represent the player other than player $i$. We use notation $a_{t:t'}$ to represent vector $(a_t, a_{t+1}, \ldots a_{t'})$ when $t'\geq t$ or an empty vector if $t'< t$. We remove superscripts or subscripts if we want to represent the whole vector, for example $ a_t$  represents $(a_t^1, a_t^2) $. We use $\delta(\cdot)$ for the Dirac delta function. We use the notation $X\sim F$ to denote that the random variable $X$ has distribution $F$.
For any Euclidean set $\mathcal{S}$, $\cP(\mathcal{S})$ represents the space of probability measures on $\mathcal{S}$ with respect to the Borel sigma algebra. We denote by $P^g$ (or $\E^g$) the probability measure generated by (or expectation with respect to) strategy profile $g$. We denote the set of real numbers by $\R$. 
For any random vector $X$ and event A, we use the notation $sm(\cdot|\cdot)$ to denote the conditional second moment, $sm(X|A) : = \E[XX^{\tr}|A]$. For any matrices $\bA$ and $\bB$, we will also use the notation $quad(\cdot;\cdot)$ to denote the quadratic function, $quad(\bA;\bB) :=\bB^{\tr}\bA\bB$. We denote trace of a matrix $\bA$ by $tr(\bA)$. $N(\hat{x},\bSigma)$ represents the vector Gaussian distribution with mean vector $\hat{x}$ and covariance matrix $\bSigma$.
All equalities and inequalities involving random variables are to be interpreted in \emph{a.s.} sense and inequalities in matrices are to be interpreted in the sense of positive definitedness. All matrix inverses are interpreted as pseudo-inverses.

\section{Model}
\label{sec:model}

We consider a discrete-time dynamical system with 2 strategic players over a finite time horizon $\mathcal{T} := \{1, 2, \ldots T\}$ and with perfect recall. There is a dynamic state of the system $x_t := (x_t^1, x_t^2)$, where $x_t^i \in \cX^i :=\R^{n_i}$ is private type of player $i$ at time $t$ which is perfectly observed by her. Player $i$ at time $t$ takes action $u_t^i \in \cU^i := \R^{m_i}$ after observing $u_{1:t-1}$, which is common information between the players, and $x_{1:t}^i$, which it observes privately. Thus at any time $t\in \cT$, player $i$'s information is $u_{1:t-1},x_{1:t}^i$. Players' types evolve linearly as 
\eq{
x_{t+1}^i = \bA_t^ix_t^i + \bB_t^i u_t + w_t^i, \label{eq:sys_evol} 
}
where $\bA_t^i, \bB_t^i$ are known matrices. $(X^1_1,X_1^2,(W_t^i)_{t\in\cT})$ are basic random variables of the system which are assumed to be independent and Gaussian such that $X_1^i\sim N(0,\bSigma^i_1)$ and $W_t^i \sim N(0,\bQ^i)$. As a consequence, types evolve as conditionally independent, controlled Markov processes,
\eq{
P(x_{t+1}|u_{1:t},x_{1:t}) &= P(x_{t+1}|u_{t},x_{t} )=\prod_{i=1}^2 Q^i(x_{t+1}^i|u_{t},x_{t}^i ) .
}
where $Q^i(x_{t+1}^i|u_{t},x_{t}^i ):= P(w_t^i =x_{t+1}^i -\bA_t^ix_t^i - \bB_t^i u_t )$.
At the end of interval $t$, player $i$ incurs an instantaneous cost $R^i(x_t,u_t)$, 
\eq{
R^i(x_t,u_t) &= u_t^{\tr}\bT^iu_t + x_t^{\tr}\bP^ix_t +  2u_t^{\tr}\bS^ix_t \nonumber \\
&= \bm{u_t^{\tr} & x_t^{\tr}} \bm{\bT^i &\bS^i \\ \bS^{i\tr} & \bP^i}\bm{u_t \\ x_t},
}
where $\bT^i, \bP^i, \bS^i$ are real matrices of appropriate dimensions and $\bT^i,\bP^i$ are symmetric. We define the instantaneous cost matrix $\bR^i$ as $\bR^i := \bm{\bT^i &\bS^i \\ \bS^{i\tr} & \bP^i}$. Let $g^i = ( g^i_t)_{t \in \mathcal{T}}$ be a probabilistic strategy of player $i$,  where $g^i_t : (\cU^i)^{t-1}\times (\cX^i)^t \to \cP(\cU^i)$ such that player $i$ plays action $u_t^i$ according to distribution $ g^i_t(\cdot|u_{1:t-1},x_{1:t}^i)$. Let $ g :=(g^i)_{i=1,2}$ be a strategy profile of both players.
The distribution of the basic random variables and their independence structure together with the system evolution in (\ref{eq:sys_evol}) and players strategy profile $g$ define a joint distribution on all random variables involved in the dynamical process. The objective of player $i$ is to maximize her total expected cost 
\eq{ 
J^{i,g} := \E^g \left\{ \sum_{t=1}^T R^i(X_t,U_t) \right\} . \label{eq:J_obj}
}
With both players being strategic, this problem is modeled as a dynamic LQG game with asymmetric information and with simultaneous moves.

\section{Preliminaries}
\label{sec:prelim}
In this section we introduce the equilibrium concept for dynamic games with asymmetric information and summarize the general methodology developed in~\cite{VaAn16} to find a class of such equilibria.

\subsection{Solution concept}

Any history of this game at which players take action is of the form $h_t = (u_{1:t-1},x_{1:t})$. Let $\cH_t$ be the set of such histories at time $t$ and $\cH := \cup_{t=0}^T \cH_t $ be the set of all possible such histories. At any time $t$ player $i$ observes $h^i_t = (u_{1:t-1},x_{1:t}^{i})$ and both players together have $h^c_t = u_{1:t-1}$ as common history. Let $\cH^i_t$ be the set of observed histories of player $i$ at time $t$ and $\cH^c_t$ be the set of common histories at time $t$. An appropriate concept of equilibrium for such games is the PBE \cite{FuTi91} which consists of a pair $(\beta^*,\mu^*)$ of strategy profile $\beta^* = (\beta_t^{*,i})_{t \in \mathcal{T},i=1,2}$ where $\beta_t^{*,i} : \cH_t^i \to \cP(\cU^i)$ and a belief profile $\mu^* = (\mu_t^{*,i})_{t \in \mathcal{T},i=1,2}$ where $\mu_t^{*,i}: \cH^i_t \to \cP(\cH_t)$ that satisfy sequential rationality so that for $ i =1,2,  \forall t \in \mathcal{T},  h^{i}_t \in \cH^i_t, {\beta^{i}}$
\eq{
\E^{(\beta^{*,i} \beta^{*,-i},\, \mu^*)}\left\{ \sum_{n=t}^T R^i(X_n, U_n)\big\lvert  h^i_t\right\} \leq \E^{({\beta}^{i} \beta^{*,-i},\, \mu^*)}\left\{ \sum_{n=t}^T R^i(X_n, U_n)\big\lvert  h^i_t\right\}, \;\; \;\;   \label{eq:seqeq}
} 
and the beliefs satisfy consistency conditions as described in~\cite[p. 331]{FuTi91}.

\subsection{Structured perfect Bayesian equilibria}

A general class of dynamic games with asymmetric information was considered in \cite{VaAn16} by the authors where players' types evolve as conditionally independent controlled Markov processes. A backward-forward algorithm was provided to find a class of PBE of the game called structured perfect Bayesian equilibria (SPBE). In these equilibria, player $i$'s strategy is of the form $U_t^i \sim m_t^i(\cdot|\pi^1_t, \pi^2_t ,x_t^i)$ where $m_t^i: \cP(\cX^1) \times \cP(\cX^2) \times \cX^i \to \cP(\cU^i)$. Specifically, player $i$'s action at time $t$ depends on her private history $x_{1:t}^i$ only through $x_t^i$. Furthermore, it depends on the common information $u_{1:t-1}$ through a common belief vector $\underline{\pi}_t:=(\pi_t^1,\pi_t^2)$ where $\pi_t^i\in \cP(\cX^i)$ is belief on player $i$'s current type $x_t^i$ conditioned on common information $u_{1:t-1}$, i.e. $\pi_t^i(x_t^i):= P^g(X_t^i = x_t^i|u_{1:t-1})$. 

The common information $u_{1:t-1}$ was summarized into the belief vector $(\pi_t^1,\pi_t^2)$ following the common agent approach used for dynamic decentralized team problems \cite{NaMaTe13}. Using this approach, player $i$'s strategy can be equivalently described as follows: player $i$ at time $t$ observes $u_{1:t-1}$ and takes action $\gamma_t^i$, where $\gamma_t^i :  \cX^i \to \mathcal{P}(\cU^i)$ is a partial (stochastic) function from her private information $x_t^i$ to  $u_t^i$ of the form $U_t^i \sim \gamma_t^i(\cdot|x_t^i)$. These actions are generated through some policy $\psi^i = (\psi^i_t)_{t \in \mathcal{T}}$, $\psi^i_t : (\cU^i)^{t-1} \to \left\{  \cX^i \to \mathcal{P}(\cU^i) \right\}$, that operates on the common information $u_{1:t-1}$ so that $\gamma_t^i = \psi_t^i[u_{1:t-1}]$. Then any policy of the player $i$ of the form $U_t^i \sim g_t^i(\cdot|u_{1:t-1},x_t^{i})$ is equivalent to $U_t^i \sim \psi^i_t[u_{1:t-1}] (\cdot|x_t^i)$ \cite{NaMaTe13}. 

The common belief $\pi_t^i$ is shown in Lemma~2 of~\cite{VaAn16} to be updated as 
\seq{
\label{eq:pi_update}
\eq{
\pi_{t+1}^i(x_{t+1}^i) = 
 \frac{  \int_{x^i_t} \pi_t^{{i}}(x_t^i) \gamma_t^i(u_t^i|x_t^i)Q_t^i(x_{t+1}^i|x_t^i ,u_t) dx_t^i} %
	{ \int_{\tilde{x}_t^i}\pi_t^{{i}}(\tilde{x}_t^i)  \gamma_t^i(u_t^i|\tilde{x}_t^i)d\tilde{x}_t^i}, 
}
if the denominator is not 0, and as 
\eq{
\pi_{t+1}^i(x_{t+1}^i) =   \int_{x^i_t} \pi_t^{{i}}(x_t^i) Q_t^i(x_{t+1}^i|x_t^i ,u_t) dx_t^i,
} 
}
if the denominator is 0. The belief update can be summarized as,
\eq{
\pi^{i}_{t+1} = {\bar{F}}(\pi^{i}_t,\gamma^i_t,u_t),
}
where ${\bar{F}}$ is independent of players' strategy profile $g$. The SPBE of the game can be found through a two-step backward-forward algorithm. In the backward recursive part, an equilibrium generating function $\theta$ is defined based on which a strategy and belief profile ($\beta^*,\mu^*$) are defined through a forward recursion.

In the following we summarize the algorithm and results of~\cite{VaAn16}.

\subsubsection{Backward Recursion}

An equilibrium generating function $\theta=(\theta^i_t)_{i=1,2,t\in\mathcal{T}}$ and a sequence of value functions $(V_t^i)_{i=1,2, t\in \{ 1,2, \ldots T+1\}}$ are defined through backward recursion, where $\theta^i_t : \cP(\cX^1)\times \cP(\cX^2) \to \left\{\cX^i \to \cP(\cU^i) \right\}$, $V_t^i : \cP(\cX^1)\times \cP(\cX^2)  \times \cX^i \to \R$, as follows. 
\bit{

\item[(a)] Initialize $\forall \underline{\pi}_{T+1}\in \cP(\cX^1)\times \cP(\cX^2) , x_{T+1}^i\in \cX^i$,
\eq{
V^i_{T+1}(\underline{\pi}_{T+1},x_{T+1}^i) := 0.   \label{eq:VT+1}
}
}

\bit{
\item[(b)] For $t = T,T-1, \ldots 1$,

$ \forall \underline{\pi}_t \in \cP(\cX^1)\times \cP(\cX^2)  $, let $\theta_t[\underline{\pi}_t] $ be generated as follows. Set $\tilde{\gamma}_t = \theta_t[\underline{\pi}_t]$ where $\tilde{\gamma}_t$ is the solution of the following fixed point equation, $\forall i \in \mathcal{N},x_t^i\in \cX^i$,
  \eq{
 \tilde{\gamma}^{i}_t(\cdot|x_t^i) \in &\arg\min_{\gamma^i_t(\cdot|x_t^i)} \E^{\gamma^i_t(\cdot|x_t^i) \tilde{\gamma}^{-i}_t} \left\{ R^i(X_t,U_t) + V_{t+1}^i (F(\underline{\pi}_t, \tilde{\gamma}_t, U_t), X_{t+1}^i) \big\lvert  \underline{\pi}_t ,x_t^i \right\} , \label{eq:M_FP}
 }
 
 where expectation in (\ref{eq:M_FP}) is with respect to random variables $(x_t^{-i},u_t ,x_{t+1}^i)$ through the measure \\
${\pi}_t^{-i}(x_t^{-i})\gamma^i_t(u^i_t|x_t^i) \tilde{\gamma}^{-i}_t(u^{-i}_t|x_t^{-i})Q^i(x_{t+1}^i|x_t^i,u_t)$ and $F(\underline{\pi}_t,\gamma_t,u_t):=(\bar{F}(\pi_t^1,\gamma_t^1,u_t),\bar{F}(\pi_t^2,\gamma_t^2,u_t))$ .

Also define 
  \eq{
  V^i_{t}(\underline{\pi}_t,x_t^i) &:= \;\; \E^{\tilde{\gamma}^{i}_t(\cdot|x_t^i) \tilde{\gamma}^{-i}_t}\left\{ {R}^i (X_t,U_t) +V_{t+1}^i (F(\underline{\pi}_t, \tilde{\gamma}_t, U_t), X_{t+1}^i)\big\lvert \underline{\pi}_t, x_t^i \right\}.  \label{eq:Vdef}
   }
}
From the equilibrium generating function $\theta$ defined though this backward recursion, the equilibrium strategy and belief profile ($\beta^*,\mu^*$) are defined as follows.

\subsubsection{Forward Recursion}
\begin{itemize} 
\item[(a)] Initialize at time $t=1$, 
\eq{
\mu^{*}_1[\phi](x_1) &:= \prod_{i=1}^N Q^i(x_1^i). \label{eq:mu*def0}
}
\item[(b)] For $t =1,2 \ldots T, \forall i =1,2, u_{1:t}\in \cH_{t+1}^c, x_{1:t}^i \in(\cX^i)^t$
\eq{
\beta_{t}^{*,i}(u_{t}^i|u_{1:t-1}x_{1:t}^i) &:= \theta_{t}^i[\mu_{t}^*[u_{1:t-1}]](u^i_{t}|x_{t}^i) \label{eq:beta*def}
}
and
\seq{
\eq{
\mu^{*,i}_{t+1}[u_{1:t}] &:= \bar{F}(\mu_t^{*,i}[u_{1:t-1}], \theta_t^i[\mu_t^*[u_{1:t-1}]], u_t) \\
\mu^{*}_{t+1}[u_{1:t}](x_t^1,x_t^2)&:= \prod_{i=1}^2\mu^{*,i}_{t+1}[u_{1:t}](x_t^i).  \label{eq:mu*def}
}
}

\end{itemize}

The strategy and belief profile $(\beta^{*}, \mu^*)$ thus constructed form an SPBE of the game~\cite[Theorem 1]{VaAn16}.

\section{SPBE of the dynamic LQG game}
\label{sec:results}

In this section, we apply the general methodology for finding SPBE described in the previous section, on the specific dynamic LQG  game model described in Section~\ref{sec:model}. We show that players' strategies that are linear in their private types in conjunction with Gaussian beliefs, form an SPBE of the game. 
We prove this result by constructing an equilibrium generating function $\theta$ using backward recursion such that for all Gaussian belief vectors $\underline{\pi}_t$, $\tilde{\gamma}_t=\theta_t[\underline{\pi}_t]$,  $\tilde{\gamma}_t^i$ is of the form $\tilde{\gamma}_t^i(u_t^i | x_t^i) =\delta(u_t^i -\btL_t^i x^i - \tm_t^i) $ and satisfies (\ref{eq:M_FP}). Based on $\theta$, we construct an equilibrium belief and strategy profile.

The following lemma shows that common beliefs remain Gaussian under linear deterministic $\gamma_t$ of the form $\gamma_t^i(u_t^i | x_t^i) =\delta(u_t^i - \bL_t^i x_t^i - m_t^i) $.

\begin{lemma}
\label{lemma:belief_update}
If $\pi^i_t$ is a Gaussian distribution with mean $\hat{x}_{t}^i$ and covariance $\bSigma_{t}^i$, and $\gamma_t^i(u_t^i | x_t^i)=\delta(u_t^i - \bL_t^i x_t^i - m_t^i) $ then $\pi_{t+1}^i$, given by (\ref{eq:pi_update}), is also Gaussian distribution with mean $\hat{x}_{t+1}^i$ and covariance $\bSigma_{t+1}^i$, where
\seq{
\eq{
\hat{x}_{t+1}^i &= \bA_t^i\hat{x}_t^i + \bB_t^i u_t +  \bA_t^i\bG_t^i (u_t^i - \bL_t^i\hat{x}_t^i-m_t^i)\label{eq:mean_update}\\
\bSigma_{t+1}^i &= \bA_t^i(\bI - \bG_t^i\bL_t^i)^{\tr}\bSigma_t^i(\bI - \bG_t^i\bL_t^i)\bA_t^{i\tr}+ \bQ^i, \label{eq:cov_update}
}
\label{eq:meancov_update}
}
where 
\eq{\bG_t^i = \bSigma_t^i\bL_t^{i\tr}(\bL_t^i\bSigma_t^i\bL_t^{i\tr})^{-1}. \label{eq:g_def}
}
\end{lemma}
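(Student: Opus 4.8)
The plan is to recognize the belief update (\ref{eq:pi_update}) as the composition of two Gaussianity-preserving operations: a Bayesian \emph{conditioning} step that incorporates the newly observed action $u_t^i$, followed by a \emph{prediction} step that propagates the conditioned belief through the linear-Gaussian type dynamics (\ref{eq:sys_evol}). Since each step maps Gaussians to Gaussians, $\pi_{t+1}^i$ is Gaussian, and the only real work is to compute its mean and covariance and match them to (\ref{eq:mean_update})--(\ref{eq:cov_update}). Throughout, the full action profile $u_t$ entering $Q_t^i$ is a known constant (it is part of the realized common information over which we condition, not a variable of integration), so $\bB_t^i u_t$ acts merely as a deterministic shift.

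First I would treat the conditioning step. Because $\gamma_t^i(u_t^i|x_t^i)=\delta(u_t^i-\bL_t^i x_t^i-m_t^i)$, observing $u_t^i$ is equivalent to the noiseless linear measurement $\bL_t^i x_t^i = u_t^i - m_t^i$ of the prior $x_t^i\sim N(\hat{x}_t^i,\bSigma_t^i)$. Forming the joint Gaussian law of $(x_t^i,\bL_t^i x_t^i)$ and applying the standard conditioning formula, the posterior $x_t^i\mid u_t^i$ is Gaussian with mean $\hat{x}_{t|t}^i=\hat{x}_t^i+\bG_t^i(u_t^i-\bL_t^i\hat{x}_t^i-m_t^i)$ and covariance $\bSigma_{t|t}^i=(\bI-\bG_t^i\bL_t^i)\bSigma_t^i$, where $\bG_t^i=\bSigma_t^i\bL_t^{i\tr}(\bL_t^i\bSigma_t^i\bL_t^{i\tr})^{-1}$ is exactly the gain in (\ref{eq:g_def}). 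The denominator in (\ref{eq:pi_update}) is precisely the marginal Gaussian density of $u_t^i$, so dividing by it supplies the normalization. To make the manipulation of the Dirac mass rigorous I would write $\delta$ as the limit of a narrow Gaussian likelihood $N(u_t^i;\bL_t^i x_t^i+m_t^i,\epsilon\bI)$ and let $\epsilon\to 0$, recovering the same formulas, or alternatively perform a linear change of variables separating the component of $x_t^i$ determined by $\bL_t^i x_t^i$ from its complement.

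Next I would carry out the prediction step. Given $x_t^i\mid u_t^i\sim N(\hat{x}_{t|t}^i,\bSigma_{t|t}^i)$, applying the affine-plus-noise map $x_{t+1}^i=\bA_t^i x_t^i+\bB_t^i u_t+w_t^i$ with $w_t^i\sim N(0,\bQ^i)$ independent yields a Gaussian with mean $\bA_t^i\hat{x}_{t|t}^i+\bB_t^i u_t$, which upon substituting $\hat{x}_{t|t}^i$ gives (\ref{eq:mean_update}), and covariance $\bA_t^i\bSigma_{t|t}^i\bA_t^{i\tr}+\bQ^i$. To recover the sandwiched form in (\ref{eq:cov_update}) I would invoke the algebraic identity $\bG_t^i\bL_t^i\bSigma_t^i\bL_t^{i\tr}=\bSigma_t^i\bL_t^{i\tr}$, immediate from (\ref{eq:g_def}); this simultaneously shows that $(\bI-\bG_t^i\bL_t^i)\bSigma_t^i$ is symmetric and that $(\bI-\bG_t^i\bL_t^i)\bSigma_t^i=(\bI-\bG_t^i\bL_t^i)\bSigma_t^i(\bI-\bG_t^i\bL_t^i)^{\tr}$, i.e. the plain and Joseph forms of the conditioned covariance coincide, matching the quadratic expression in (\ref{eq:cov_update}).

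The main obstacle is the rigorous treatment of the degenerate conditioning rather than the algebra. When $\bL_t^i$ is not full column rank (e.g. scalar actions against a vector type), the noiseless measurement pins $x_t^i$ only along a subspace, so the posterior is a \emph{singular} Gaussian supported on a proper affine subspace and $\bL_t^i\bSigma_t^i\bL_t^{i\tr}$ may be rank-deficient; the inverse in (\ref{eq:g_def}) must then be read as a pseudo-inverse, as the paper stipulates. The limiting/change-of-variables argument above is precisely what guarantees that the conditioning formula, and hence the recursion (\ref{eq:cov_update}), remains the correct $\epsilon\to 0$ limit uniformly in the (possibly varying) rank of $\bL_t^i\bSigma_t^i\bL_t^{i\tr}$. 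Once conditioning is justified, the propagation and the check that $(\hat{x}_{t+1}^i,\bSigma_{t+1}^i)$ again parametrize a Gaussian are routine.
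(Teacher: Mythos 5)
Your proposal is correct and follows essentially the same route as the paper's Appendix~A: both split the update into conditioning on the noiseless linear measurement $\bL_t^i X_t^i = u_t^i - m_t^i$ followed by propagation through the linear type dynamics, and both arrive at the same gain $\bG_t^i$. The only cosmetic difference is that the paper justifies the conditioning by constructing $\bG_t^i$ so that $X_t^i - \bG_t^i\bL_t^i X_t^i$ is independent of $\bL_t^i X_t^i$, which sidesteps your $\epsilon$-regularization of the Dirac mass and yields the quadratic (Joseph-form) covariance directly, without needing your reconciling identity $\bG_t^i\bL_t^i\bSigma_t^i\bL_t^{i\tr} = \bSigma_t^i\bL_t^{i\tr}$.
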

\begin{proof}
See Appendix~\ref{app:belief_update}
\end{proof}

Based on previous lemma, we define $\phi_x^i,\phi_s^i $ as update functions of mean and covariance matrix, respectively, as defined in (\ref{eq:meancov_update}), such that 
\seq{
\label{eq:phi_def}
\eq{
\hat{x}_{t+1}^i &= \phi_x^{i}(\hat{x}_t^i,\bSigma_t^i,\bL_t^i,m_t^i, u_t)\label{eq:phix_def}\\
\bSigma_{t+1}^i &= \phi_s^{i}(\bSigma_t^i,\bL_t^i)\label{eq:phis_def} .
}
}
We also say,
\eq{
\hat{x}_{t+1} &= \phi_x(\hat{x}_t,\bSigma_t,\bL_t,m_t, u_t)\label{eq:phix_def2}\\
\bSigma_{t+1} &= \phi_s(\bSigma_t,\bL_t) \label{eq:phis_def2} .
}
The previous lemma shows that with linear deterministic $\gamma^i_t$, the next update of the mean of the common belief, $\hat{x}_{t+1}^i$ is linear in $\hat{x}_t^i$ and the control action $u_t^i$. Furthermore, these updates are given by appropriate Kalman filter equations. It should be noted however that the covariance update in (\ref{eq:cov_update}) depends on the strategy through $\gamma_t^i$ and specifically through the matrix $\bL_t^i$. This specifically shows how belief updates depend on strategies on the players which leads to signaling, unlike the case in classical stochastic control and the model considered in~\cite{GuNaLaBa14}. 


Now we will construct an equilibrium generating function $\theta$ using the backward recursion in (\ref{eq:VT+1})--(\ref{eq:Vdef}). The $\theta$ function generates linear deterministic partial functions $\gamma_t$, which, from Lemma~\ref{lemma:belief_update} and the fact that initial beliefs (or priors) are Gaussian, generates only Gaussian belief vectors $(\pi_t^1,\pi_t^2)_{t\in\cT}$ for the whole time horizon. These beliefs can be sufficiently described by their mean and covariance processes $(\hat{x}_t^1,\bSigma_t^1)_{t\in\cT}$ and $(\hat{x}_t^2,\bSigma_t^2)_{t\in\cT}$ which are updated using (\ref{eq:meancov_update}). 

For $t = T+1,T,\ldots, 1 $, we define the vectors
\eq{
e_t^i  := \bm{ x_t^i\\ \hat{x}_{t}^1\\ \hat{x}_{t}^2} \quad
 z_t^i  := \bm{u_t^i\\ x_t^i\\ \hat{x}_{t}^1\\ \hat{x}_{t}^2} \quad
 y_t^i := \bm{u_t^1\\u_t^2\\x_t^1\\x_t^2\\x_{t+1}^i\\ \hat{x}_{t+1}^1\\ \hat{x}_{t+1}^2}. \label{eq:ezy_def}
}

\begin{theorem}
\label{thm:back_recr}
The backward recursion (\ref{eq:VT+1})--(\ref{eq:Vdef}) admits\footnote{Under certain conditions, stated in the proof.} a solution of the form $\theta_t[\pi_t] = \theta_t[\hat{x}_t,\bSigma_t] = \tilde{\gamma}_t$ where ${ \tilde{\gamma}^{i}_t}(u_t^{i}|x^{i}_t) = \delta(u_t^i - \btL_t^{i}x_t^{i} - \tm_t^{i})$ and $\btL_t^i, \tm_t^i$ are appropriately defined matrices and vectors, respectively. Furthermore, the value function reduces to
\seq{
\eq{
V_t^i(\underline{\pi}_t,x_t^i) &= V_t^i(\hat{x}_t,\bSigma_t,x_t^i) \\
&= quad(\bV_t^i(\bSigma_t); e_t^i) + \rho_t^i(\bSigma_t).
}
}
with $\bV_t^i(\bSigma_t)$ and $\rho_t^i(\bSigma_t)$ as appropriately defined matrix and scalar quantities, respectively.
\end{theorem}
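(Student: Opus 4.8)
The plan is to prove this theorem by backward induction on $t$, establishing simultaneously that (i) $\theta_t[\underline{\pi}_t]$ produces linear deterministic partial functions $\tilde\gamma_t^i(u_t^i|x_t^i) = \delta(u_t^i - \btL_t^i x_t^i - \tm_t^i)$, and (ii) the value function $V_t^i$ has the claimed quadratic form $quad(\bV_t^i(\bSigma_t);e_t^i) + \rho_t^i(\bSigma_t)$, where crucially the quadratic is in the vector $e_t^i = (x_t^i, \hat{x}_t^1, \hat{x}_t^2)^{\tr}$ and the coefficient matrices depend on the common information only through the covariances $\bSigma_t$. The base case $t=T+1$ holds trivially since $V_{T+1}^i \equiv 0$, which is of the desired form with $\bV_{T+1}^i = \bZero$ and $\rho_{T+1}^i = 0$.

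\medskip

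\textbf{Inductive step.} Assume $V_{t+1}^i(\hat{x}_{t+1},\bSigma_{t+1},x_{t+1}^i) = quad(\bV_{t+1}^i(\bSigma_{t+1});e_{t+1}^i) + \rho_{t+1}^i(\bSigma_{t+1})$. I would first substitute the Kalman updates from Lemma~\ref{lemma:belief_update} into this expression. By (\ref{eq:mean_update}) the mean update $\hat{x}_{t+1}^i$ is \emph{affine} in $(u_t, \hat{x}_t^i, \tm_t^i)$, and by (\ref{eq:phis_def2}) the covariance $\bSigma_{t+1} = \phi_s(\bSigma_t,\btL_t)$ depends only on $\bSigma_t$ and $\btL_t$ (not on the realized actions). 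Writing everything in terms of the augmented vector $y_t^i$ defined in (\ref{eq:ezy_def}), the composed map $e_{t+1}^i \mapsto e_{t+1}^i(y_t^i)$ is linear, so $V_{t+1}^i$ becomes a quadratic form $quad(\,\cdot\,; y_t^i)$ whose matrix depends on $\bSigma_t$ (through both $\bV_{t+1}^i(\bSigma_{t+1})$ and the linear map). Adding the instantaneous cost $R^i(x_t,u_t) = quad(\bR^i;(u_t^{\tr},x_t^{\tr})^{\tr})$, which is also quadratic in a subvector of $y_t^i$, the entire objective inside (\ref{eq:M_FP}) is quadratic in $z_t^i = (u_t^i, x_t^i, \hat{x}_t^1, \hat{x}_t^2)^{\tr}$ after taking expectations.

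\medskip

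\textbf{Solving the fixed point.} To carry out the minimization (\ref{eq:M_FP}) I would take the expectation over $(x_t^{-i}, u_t^{-i}, w_t^i)$ under the belief $\pi_t^{-i}$ (Gaussian, mean $\hat{x}_t^{-i}$) and under the opponent's conjectured linear strategy $\tilde\gamma_t^{-i}$. Because $x_t^{-i}$ enters only linearly through $u_t^{-i} = \btL_t^{-i}x_t^{-i} + \tm_t^{-i}$, the expectation replaces $x_t^{-i}$ by its conditional mean $\hat{x}_t^{-i}$ and contributes a covariance term $quad(\cdot;\bSigma_t^{-i})$ that is constant in $u_t^i$ and so folds into $\rho_t^i$. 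The resulting objective is a strictly convex quadratic in $u_t^i$ (strict convexity being exactly the condition alluded to in the footnote, namely positive definiteness of the Hessian block). Setting its gradient to zero yields a first-order stationarity condition that is \emph{linear} in $u_t^i$, $x_t^i$, $\hat{x}_t^i$, $\hat{x}_t^{-i}$; solving gives $\tilde{u}_t^i = \btL_t^i x_t^i + \tm_t^i$, confirming the linear form and identifying $\btL_t^i, \tm_t^i$. The coupled stationarity conditions for $i=1,2$ constitute the matrix algebraic equation (a fixed point in $(\btL_t^1,\btL_t^2)$), whose solvability I would record as the stated condition. Finally, back-substituting the optimal $\tilde\gamma_t$ into (\ref{eq:Vdef}) and collecting terms into the basis $e_t^i$ defines $\bV_t^i(\bSigma_t)$ and $\rho_t^i(\bSigma_t)$, completing the induction.

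\medskip

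\textbf{The main obstacle} is establishing existence of a solution to the coupled fixed-point matrix equation for $(\btL_t^1,\btL_t^2)$. Unlike the single-agent LQG Riccati recursion, here each player's best-response gain depends on the opponent's gain, and — through the signaling effect captured in (\ref{eq:cov_update}) — the value-function matrices $\bV_{t+1}^i$ themselves depend on $\bSigma_{t+1} = \phi_s(\bSigma_t,\btL_t)$, so the gains appear inside the very quadratic forms they are meant to optimize. This couples the optimization to the belief dynamics in a way that need not admit a fixed point in general; I would handle it by imposing and verifying invertibility/positive-definiteness conditions on the relevant Hessian blocks (and, for the scalar-action case, the explicit sufficient conditions promised in contribution~(d)), guaranteeing both convexity of each player's subproblem and existence of the joint solution.
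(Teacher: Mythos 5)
Your proposal is correct and follows essentially the same route as the paper's proof: backward induction on the quadratic value-function ansatz, using the Kalman-filter structure of the belief updates to reduce each player's problem to a strictly convex quadratic in $u_t^i$ (under the positive-definiteness condition), whose first-order conditions yield the linear gains and the coupled matrix fixed-point equation, with the opponent-type covariance and process-noise terms absorbed into $\rho_t^i(\bSigma_t)$. The only difference is that the paper carries this out explicitly via the augmented matrices $\bVb_t^i$, $\bD_t^i$, $\bC_t^i$, $\bJ_t^i$, which you leave implicit.
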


\begin{proof}
We construct such a $\theta$ function through the backward recursive construction and prove the properties of the corresponding value functions inductively.

\bit{
\item[(a)] For $i=1,2,\forall \ \bSigma_{T+1},$ let $\bV^i_{T+1}(\bSigma_{T+1}) := \bZero, \rho_{T+1}^i(\bSigma_{T+1}) := 0$. 
Then $\forall \ \hat{x}_{T+1}^1, \hat{x}_{T+1}^2$,\\$ \bSigma_{T+1}^1,\bSigma_{T+1}^2, x_{T+1}^i$ and  for $  \underline{\pi}_{t}= (\pi_{t}^1,\pi_{t}^2)$, where $\pi_{t}^i$ is $N(\hat{x}_{t}^i,\bSigma_{t}^i)$,
\seq{
\eq{
V^i_{T+1}(\underline{\pi}_{T+1},x_{T+1}^i) &:=0\\
&= V_{T+1}^i(\hat{x}_{T+1},\bSigma_{T+1},x_{T+1}^i)\\
&=quad(\bV_{T+1}^i(\bSigma_{T+1}),e_{T+1}^i) + \rho_{T+1}^i(\bSigma_{T+1}).
}
}
\item[(b)]  For all $t\in \{T,T-1,\ldots, 1 \}, i=1,2$,

Suppose $V^i_{t+1}(\underline{\pi}_{t+1},x_{t+1}^i)  = quad(\bV_{t+1}^i(\bSigma_{t+1}),e_{t+1}^i) + \rho_{t+1}^i(\bSigma_{t+1})$ (from induction hypothesis) where $\bV^i_{t+1}$ is a symmetric matrix defined recursively. Define $\bVb_t^i$ as 
\eq{
\bVb_t^i(\bSigma_t,\bL_t) &:= \bm{\bT^i&\bS^{i} & \bZero \\ \bS^{i\tr}&\bP^i & \bZero \\  \bZero & \bZero & \bV^i_{t+1}(\phi_s(\bSigma_t,\bL_t)) }. \label{eq:bVb_def}
}
Since $\bT^i,\bP^i$ are symmetric by assumption, $\bVb_t^i$ is also symmetric.

For ease of exposition, we will assume $i=1$ and for player $2$, a similar argument holds. At time $t$, the quantity that is minimized for player $i=1$ in (\ref{eq:M_FP}) can be written as
\eq{
&\E^{\gamma^1_t(\cdot|x_t^1)}\left[ \E^{ \tilde{\gamma}^{2}_t} \left[ R^1(X_t,U_t) +  V_{t+1}^1 (F(\underline{\pi}_t, \tilde{\gamma}_t, U_t), X_{t+1}^1) \big\lvert  \underline{\pi}_t ,x_t^1,u_t^1 \right] \big\lvert  \underline{\pi}_t ,x_t^1\right].
 }
The inner expectation can be written as follows, where ${ \tilde{\gamma}^{2}_t}(u_t^{2}|x^{2}_t) =\delta(u_t^2-  \btL_t^{2}x_t^{2} - \tm_t^{2})$,
\seq{
\eq{
&\E^{ \tilde{\gamma}^{2}_t} \left[quad\left(\bm{\bT^1&\bS^{1}\\\bS^{1\tr}&\bP^1};z_t^i\right) + quad\left(\bV_{t+1}^1(\phi_s(\bSigma_{t},\btL_t));e_{t+1}^i\right) + \rho_{t+1}^1(\phi_s(\bSigma_{t},\btL_t)) \big\lvert \pi_t,x_t^1, u_t^i\right]\\
 &= \E^{ \tilde{\gamma}^{2}_t} \left[ quad\left(\bVb_t^1(\bSigma_{t},\btL_t);  y_t^1\right) + \rho_{t+1}^1(\phi_s(\bSigma_{t},\btL_t))  \big\lvert \pi_t,x_t^1,  u_t^1\right]  \\ 
 %
 %
 &= quad\left(\bVb_t^1(\bSigma_{t},\btL_t); \bD_t^1z_t^1+ \bC_t^1 \bm{m_t^1\\ \tm_t^2}  \right) + \rho_{t}^1(\bSigma_t), \label{eq:term2min}
 }
 }
where $\bVb_t^i$ is defined in (\ref{eq:bVb_def}) and function $\phi_s$ is defined in (\ref{eq:phis_def2}); $y_t^i, z_t^i $ are defined in (\ref{eq:ezy_def}); $\rho_t^i$ is given by 
\eq{
\rho_t^i(\bSigma_t) &= tr\left(\bSigma_t^{-i} quad\left(\bVb_t^i(\bSigma_t,\btL_t); \bJ_t^i \right)\right) +tr(\bQ^iV_{11,t+1}^i(\phi_s(\bSigma_t,\btL_t))) + \rho_{t+1}^i(\phi_s(\bSigma_t,\btL_t)), \label{eq:rho_update}
}
where $V_{11,t+1}^i$ is the matrix corresponding to $x_{t+1}^i$ in $V_{t+1}^i$ i.e. in the first row and first column of the matrix $V_{t+1}^i$; and matrices $\bD_t^i, \bC_t^i, \bJ_t^i$ are as follows,

\seq{
\eq{
 &\bD_t^1:= \bm{\bI &\bZero &\bZero &\bZero \\\bZero & \bZero&\bZero &\btL_t^2  \\\bZero&\bI&\bZero&\bZero\\ \bZero&\bZero&\bZero & \bI\\\bB^1_{1,t} &\bA_t^1 & \bZero&\bB_{2,t}^1\btL_t^2 \\\bA_t^1\bG_t^1+\bB_{1,t}^1 &\bZero &\bA_t^1(\bI - \bG_t^1\bL_t^1) & \bB_{2,t}^1\btL_t^2\\\bB_{1,t}^2 &\bZero &\bZero &\bA_t^2  + \bB_{2,t}^2\btL_t^2 }
 }
 \eq{
& \bD_t^2:= \bm{\bZero &\bZero &\btL_t^1 &\bZero \\\bI & \bZero&\bZero & \bZero \\ \bZero&\bZero&\bI&\bZero\\ \bZero&\bI &\bZero & \bZero\\\bB^2_{2,t} &\bA_t^2 &\bB_{1,t}^2\btL_t^1&\bZero \\
 \bB_{2,t}^1 &\bZero &\bA_t^1+ \bB_{1,t}^1\btL_t^1&\bZero \\\bA_t^2\bG_t^2+\bB_{2,t}^2 &\bZero& \bB_{1,t}^2\btL_t^1 &\bA_t^2 (\bI- \bG_t^2\bL_t^2) } 
 }
 }
 \eq{
 \bC_t^1&:= \bm{\bZero & \bZero \\\bZero & \bI \\\bZero & \bZero\\\bZero & \bZero \\ \bZero & \bB_{2,t}^1 \\-\bA_t^1\bG_t^1&  \bB_{2,t}^1 \\\bZero & \bB_{2,t}^2 } \quad
 \bC_t^2 := \bm{\bI & \bZero \\ \bZero & \bZero\\\bZero & \bZero\\\bZero & \bZero \\\bB_{1,t}^2& \bZero \\\bB_{1,t}^1 & \bZero \\ \bB_{1,t}^2& -\bA_t^2\bG_t^2 }\label{eq:DC_def}
 }
 \eq{
& \bJ_t^{1\tr}:= \bm{\bZero & \bL_t^2 & \bZero & \bI & \bB_{2,t}^1\bL_t^2 & \bB_{2,t}^1\bL_t^2 & (\bB_{2,t}^2+ \bA_t^2\bG_t^2)\bL_t^2}^{\tr}\nonumber \\
& \bJ_t^{2\tr}:=\bm{\bL_t^1& \bZero& \bI & \bZero & \bB_{1,t}^2\bL_t^1 & (\bB_{1,t}^1 + \bA_t^1\bG_t^1)\bL_t^1 & \bB_{1,t}^2\bL_t^1}^{\tr}\label{eq:J_def}
}
where $\bB_t^i =: \bm{\bB_{1,t}^i & \bB_{2,t}^i}$, $\bB_{1,t}^i, \bB_{2,t}^i$ are the parts of the matrix $\bB_{t}^i$ that corresponds to $u_t^1,u_t^2$ respectively. 
Let $\bD_t^1 =:\bm{\bD_t^{u1} &\bD_t^{e1}}$ where $\bD_t^{u1}$ is the first column matrix of $\bD_t^1$ corresponding to $u_t^1$ and $\bD_t^{e1}$ is the matrix composed of remaining three column matrices of $\bD_t^1$ corresponding to $e_t^1$.
The expression in (\ref{eq:term2min}) is averaged with respect to $u_t^1$ using the measure $\gamma_t^1(\cdot|x_t^1)$ and minimized in (\ref{eq:M_FP}) over $\gamma_t^1(\cdot|x_t^1)$. This minimization can be performed component wise leading to a deterministic policy $\tilde{\gamma}_t^1(u_t^1|x_t^1) = \delta(u_t^1 - \btL_t^1x_t^1 - \tm_t^1)=\delta(u_t^1 - u_t^{1*})$,
assuming that the matrix $\btD_t^{u1\tr}\bVb_t^{1}\btD_t^{u1}$ is positive definite\footnote{This condition is true if the instantaneous cost matrix $\bR^i =\bm{\bT^i &\bS^i \\ \bS^{i\tr} & \bP^i}$ is positive definite and can be proved inductively in the proof by showing that $\bV_t^{i}$ and $\bVb_t^{i}$ are positive definite.}. In that case, the unique minimizer $u_t^{1*} = \btL_t^1 x_t^1 + \tm_t^1$ can be found by differentiating (\ref{eq:term2min}) w.r.t. $u_t^{1\tr}$ and equating it to $\bZero$, resulting in the equation,
\seq{
\eq{
\bZero&=2\bm{\bI&\bZero&\bZero&\bZero}\btD_t^{1\tr}\bVb_t^{1}(\bSigma_t,\btL_t)\left(\btD_t^1z_t^1+\btC_t^1\tm_t \right)\\
 \bZero&=\btD_t^{u1\tr}\bVb_t^{1}(\bSigma_t,\btL_t)\left(\btD_t^{u1} u_t^{1*} + \btD_t^{e1}e_t^1 + \btC_t^1\tm_t\right)\\
 \bZero&=\btD_t^{u1\tr}\bVb_t^{1}(\bSigma_t,\btL_t)\left(\btD_t^{u1} (\btL_t^1 x_t^1 + \tm_t^1) + [\btD_t^{e1}]_1x_t^1 + [\btD_t^{e1}]_{23}\hat{x}_t + \btC_t^1\tm_t\right),\label{eq:M_FP_eqv}
}
}
where $[\bD^{ei}]_{1}$ is the first matrix column of $\bD^{ei}$, $[\bD^{ei}]_{23}$ is the matrix composed of the second and third column matrices of  $\bD^{ei}$. Matrices $\btD_t^i, \btC_t^i$ are obtained by substituting $\btL_t^i,\btG_t^i$ in place of $\bL_t^i,\bG_t^i$ in the definition of $\btD_t^i, \btC_t^i$ in (\ref{eq:DC_def}), respectively, and $\btG_t^i$ is the matrix obtained by substituting $\btL_t^i$ in place of $\bL_t^i$ in (\ref{eq:g_def}). 

Thus (\ref{eq:M_FP_eqv}) is equivalent to (\ref{eq:M_FP}) and with a similar analysis for player 2, it implies that $\btL_t^i$ is solution of the following algebraic fixed point equation, 
\seq{
\label{eq:L_fp}
\eq{
&\left(\btD_t^{ui\tr}\bVb_t^{i}(\bSigma_t,\btL_t)\btD_t^{ui} \right)\btL_t^i = -\btD_t^{ui\tr}\bVb_t^{i}(\bSigma_t,\btL_t) [\btD_t^{ei}]_1. 
}
For player 1, it reduces to,
\eq{
\label{eq:L_fp2}
&\hspace{-10pt}\left[\bT_{11}^{1} + \bm{\bB_{1,t}^1 \\  \bA_t^1\bG_t^1+ \bB_{1,t}^1\\  \bB_{1,t}^2 }^{\tr}\bV_{t+1}^{1}(\phi_s(\bSigma_t,\btL_t)) \bm{ \bB_{1,t}^1  \\ \bA_t^1\bG_t^1+ \bB_{1,t}^1  \\\bB_{1,t}^2  } \right] \btL_t^1 \nonumber \\
&=  - \left[ \bS_{11}^{1\tr} +  \bm{\bB_{1,t}^1 \\  \bA_t^1\bG_t^1+ \bB_{1,t}^1\\  \bB_{1,t}^2 }^{\tr}\bV_{t+1}^{1}(\phi_s(\bSigma_t,\btL_t)) \bm{\bA_t^1\\0\\0}\right],
}
}
and a similar expression holds for player 2.

In addition, $\tm_t$ can be found from (\ref{eq:M_FP_eqv}) as
\seq{
\eq{
&\bm{\btD_t^{u1\tr}\bVb_t^{1}\btD_t^{u1}  & \bZero \\ \bZero &\btD_t^{u2\tr}\bVb_t^{2}\btD_t^{u2}  }\tm_t 
= -\bm{\btD_t^{u1\tr}\bVb_t^{1}[\btD_t^{e1} ]_{23}\\  \btD_t^{u2\tr}\bVb_t^{2}[ \btD_t^{e2}]_{23}} \hat{x}_t- 
\bm{\btD_t^{u1\tr}\bVb_t^{1} \btC_t^1\\\btD_t^{u2\tr}\bVb_t^{2} \btC_t^2} \tm_t \\
 \tm_t &= -\left[ \bm{\btD_t^{u1\tr}\bVb_t^{1}\btD_t^{u1}  & \bZero \\ \bZero &\btD_t^{u2\tr}\bVb_t^{2}\btD_t^{u2}  } +\bm{\btD_t^{u1\tr}\bVb_t^{1} \btC_t^1\\  \btD_t^{u2\tr}\bVb_t^{2} \btC_t^2} \right]^{-1}
 \bm{\btD_t^{u1\tr}\bVb_t^{1}[\btD_t^{e1}]_{23}\\  \btD_t^{u2\tr}\bVb_t^{2} [\btD_t^{e2}]_{23}} \hat{x}_t\\
&=:\btM_t \hat{x}_t =: \bm{\btM_t^1\\\btM_t^2} \hat{x}_t, \label{eq:m_eq}
}
}

 Finally, the resulting cost for player $i$ is,
\seq{
\eq{
V^i_{t}(\underline{\pi}_t,x_t^i)  &= V^i_{t}(\hat{x}_t,\bSigma_t,x_t^i) \\
&:=quad\left(\bVb_t^i(\bSigma_t,\btL_t); \bm{\btD_t^{ui} & \btD_t^{ei}} \bm{ \btL_t^i x_t^i + \btM_t^i\hat{x}_t\\ e_t^i }+ \btC_t^i\btM_t\hat{x}_t  \right)  + \rho_t^i(\bSigma_t)\\
&=quad\left(\bVb_t^i(\bSigma_t,\btL_t); \btD_t^{ui}(\btL_t^i x_t^i + \btM_t^i\hat{x}_t )+ \btD_t^{e1}e_t^i + \btC_t^i \btM_t \hat{x}_t \right) + \rho_t^i(\bSigma_t)\\
&=quad\left(\bVb_t^i(\bSigma_t,\btL_t); \left(\bm{\btD_t^{ui}\btL_t^i  & \btD_t^{ui}\btM_t^i+ \btC_t^i\btM_t}  + \btD_t^{ei}\right)e_t^i \right)  + \rho_t^i(\bSigma_t)\\
&=quad\left(\bVb_t^i(\bSigma_t,\btL_t);\btF_t^ie_t^i \right) + \rho_t^i(\bSigma_t)\\
&=quad\left(\btF_t^{i\tr}\bVb_t^i(\bSigma_t,\btL_t)\btF_t^i; e_t^i \right) + \rho_t^i(\bSigma_t)\\
&=quad\left(\bV_t^i(\bSigma_{t}); e_t^i \right) + \rho_t^i(\bSigma_t),
}
}
where,
\seq{
\label{eq:V_update}
\eq{
\btF_t^i &:=\bm{\btD_t^{ui}\btL_t^i  & \btD_t^{ui}\btM_t^i+ \btC_t^i\btM_t} + \btD_t^{ei}\\
\bV_t^i(\bSigma_{t}) &:= \btF_t^{i\tr}\bVb_t^i(\bSigma_t,\btL_t)\btF_t^i. \label{eq:bV_def}
}
}
Since $\bVb_t^i$ is symmetric, so is $\bV_t^i$. Thus the induction step is completed.
}
\end{proof}

Taking motivation from the previous theorem and with slight abuse of notation, we define 
\eq{
\tilde{\gamma}_t &= \theta_t[\underline{\pi}_t] = \theta_t[\hat{x}_t,\bSigma_t],
}
and since $\tilde{\gamma}_t^i(u_t^i|x_t^i) = \delta(u_t^i- \btL_t^i x_t^i - \tm_t^i)$, we define a reduced mapping $(\theta^L,\theta^m)$ as
\eq{
 \theta_t^{Li}[\hat{x}_t,\bSigma_t] = \theta_t^{Li}[\bSigma_t] := \btL_t^i \;\; \text{ and } \;\; \theta_t^{mi}[\hat{x}_t,\bSigma_t] := \tm_t^i, 
}
where $\btL_t^i$ does not depend on $\hat{x}_t$ and $\tm_t^i$ is linear in $\hat{x}_t$ and is of the form $\tm_t^i = \btM_t^i\hat{x}_t$.

Now we construct the equilibrium strategy and belief profile $(\beta^*,\mu^*)$ through the forward recursion in (\ref{eq:mu*def0})--(\ref{eq:mu*def}), using the equilibrium generating function $\theta \equiv (\theta^L,\theta^m)$.
\bit{
\item[(a)] Let \eq{
\mu^{*,i}_1[\phi](x_1^i)= N(0,\bSigma_1^i).
} 

\item[(b)] For $t =1,2 \ldots T-1, \forall u_{1:t} \in \cH_{t+1}^c$, if $\mu_t^{*,i}[u_{1:t-1}] = N(\hat{x}_{t}^i, \bSigma_t^i)$, let $\btL_t^i =  \theta_t^{Li}[\bSigma_t], \tm_t^i =  \theta_t^{mi}[\hat{x}_t,\bSigma_t] = \btM_t^i\hat{x}$. Then $ \forall  x_{1:t}^i \in(\cX^i)^t$
\seq{
\label{eq:meancov_update_FI}
\eq{
\beta_{t}^{*,i}(u_{t}^i|u_{1:t-1}x_{1:t}^i) &:= \delta(u_t^i - \btL_t^ix_t^i - \btM_t^i\hat{x}_t)\\
\mu_{t+1}^{*,i}[u_{1:t}] &:= N(\hat{x}_{t+1}^i ,\bSigma_{t+1}^i) \\
\mu^{*}_{t+1}[u_{1:t}](x_t^1,x_t^2)&:= \prod_{i=1}^2\mu^{*,i}_{t+1}[u_{1:t}](x_t^i),
} 
}

where $\hat{x}_{t+1}^i = \phi_x^{i}(\hat{x}_t^i, \btL_t^i, \tm_t^i, u_t)$ and $\bSigma_{t+1}^i = \phi_s^{i}(\bSigma_t^i,\btL_t^i)$.
}

\begin{theorem}
 $(\beta^*,\mu^*)$ constructed above is a PBE of the dynamic LQG game.
\end{theorem}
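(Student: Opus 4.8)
The plan is to obtain this theorem as a direct specialization of the general SPBE construction of \cite[Theorem 1]{VaAn16}, using Theorem~\ref{thm:back_recr} to discharge the backward part and Lemma~\ref{lemma:belief_update} to discharge the forward part. First I would observe that the dynamic LQG game of Section~\ref{sec:model} is an instance of the general class of games with asymmetric information and independent controlled-Markov private types treated in \cite{VaAn16}: the type evolution factorizes as $P(x_{t+1}|u_{1:t},x_{1:t})=\prod_i Q^i(x_{t+1}^i|u_t,x_t^i)$, which is precisely the conditional-independence hypothesis under which the equilibrium-generating machinery and the belief update $\bar F$ of \cite{VaAn16} apply. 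Since \cite[Theorem 1]{VaAn16} asserts that \emph{any} $(\beta^*,\mu^*)$ produced by the backward recursion~(\ref{eq:VT+1})--(\ref{eq:Vdef}) followed by the forward recursion~(\ref{eq:mu*def0})--(\ref{eq:mu*def}) is an SPBE --- a designated subclass of PBE --- it suffices to check that the objects $(\theta^L,\theta^m)$ and $(\beta^*,\mu^*)$ constructed here are legitimate realizations of the generic $\theta$ and $(\beta^*,\mu^*)$ of that framework.

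I would then carry out the two verification steps. For the backward part, Theorem~\ref{thm:back_recr} already shows that $\theta_t[\underline{\pi}_t]=\tilde\gamma_t$ with $\tilde\gamma_t^i(u_t^i|x_t^i)=\delta(u_t^i-\btL_t^i x_t^i-\tm_t^i)$ solves the fixed-point equation~(\ref{eq:M_FP}) and yields quadratic value functions, so $\theta$ is a bona fide equilibrium-generating function; moreover the reduced map is well defined because $\btL_t^i$ is independent of $\hat x_t$ and $\tm_t^i=\btM_t^i\hat x_t$ is linear in $\hat x_t$. For the forward part, I would argue inductively that the beliefs generated by~(\ref{eq:mu*def0})--(\ref{eq:mu*def}) stay Gaussian for every realization of the common information: the prior $\mu_1^{*,i}[\phi]=N(0,\bSigma_1^i)$ is Gaussian, and whenever $\mu_t^{*,i}[u_{1:t-1}]=N(\hat x_t^i,\bSigma_t^i)$ the partial function $\theta_t^i$ is linear deterministic of the exact form required by Lemma~\ref{lemma:belief_update}, whence $\mu_{t+1}^{*,i}[u_{1:t}]=N(\hat x_{t+1}^i,\bSigma_{t+1}^i)$ with mean and covariance advanced by $(\phi_x^i,\phi_s^i)$. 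Thus the abstract update $\bar F$ is realized by the Kalman recursions~(\ref{eq:meancov_update_FI}), so the LQG forward recursion is literally the specialization of the general one. Having matched both recursions, I would invoke \cite[Theorem 1]{VaAn16} to conclude that $(\beta^*,\mu^*)$ is an SPBE and therefore satisfies sequential rationality~(\ref{eq:seqeq}) together with the consistency conditions of \cite{FuTi91}, i.e.\ it is a PBE.

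The main obstacle is ensuring that the hypotheses of Theorem~\ref{thm:back_recr} hold along the \emph{entire} induction and at \emph{every} information set, not merely on the equilibrium path, since PBE demands sequential rationality everywhere. Concretely, the fixed point~(\ref{eq:M_FP}) must be solvable for every belief $\underline{\pi}_t$ that can arise, which under the reduction means the algebraic equation~(\ref{eq:L_fp}) must admit a solution for every reachable covariance $\bSigma_t$, and the matrix $\btD_t^{ui\tr}\bVb_t^{i}\btD_t^{ui}$ must remain positive definite so that each per-stage quadratic has the unique linear minimizer $u_t^{i*}=\btL_t^i x_t^i+\tm_t^i$. I would address this by carrying positive definiteness of $\bV_t^i$ and $\bVb_t^i$ as part of the inductive hypothesis (as flagged in the footnote to Theorem~\ref{thm:back_recr}), which simultaneously guarantees existence of the best response and preservation of the quadratic value-function form, and by letting the backward step range over the set of covariances reachable under~(\ref{eq:cov_update}). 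Consistency of off-path beliefs then comes for free: because the Gaussian family is closed under $\bar F$, the same $(\phi_x^i,\phi_s^i)$ updates define $\mu^{*}$ whether or not the observed $u_t$ is on the equilibrium path, so no separate off-path belief argument is needed beyond what \cite[Theorem 1]{VaAn16} already supplies.
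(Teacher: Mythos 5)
Your proposal is correct and follows essentially the same route as the paper: the paper's proof is a two-sentence appeal to Theorem~1 of \cite{VaAn16}, noting that $(\beta^*,\mu^*)$ arises from the forward recursion applied to the equilibrium generating function $\theta$ built in the proof of Theorem~\ref{thm:back_recr}. Your additional verification steps (Gaussian closure of beliefs via Lemma~\ref{lemma:belief_update}, solvability of \eqref{eq:L_fp} and positive definiteness of $\btD_t^{ui\tr}\bVb_t^{i}\btD_t^{ui}$ at every reachable $\bSigma_t$) are exactly the conditions the paper delegates to Theorem~\ref{thm:back_recr}, its footnote, and the discussion in Section~\ref{sec:disc}, so they elaborate rather than alter the argument.
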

\begin{proof}
The strategy and belief profile $(\beta^*,\mu^*)$ is constructed using the forward recursion steps (\ref{eq:mu*def0})--(\ref{eq:mu*def}) on equilibrium generating function $\theta$, which is defined through backward recursion steps (\ref{eq:VT+1})--(\ref{eq:Vdef}) implemented in the proof Theorem~\ref{thm:back_recr}. Thus the result is directly implied by Theorem 1 in~\cite{VaAn16}.
\end{proof}

\section{Disussion}
\label{sec:disc}
\subsection{Existence}
In the proof of Theorem~\ref{thm:back_recr}, $\btD_t^{u1\tr}\bVb_t^{1}\btD_t^{u1}$ is assumed to be positive definite. This can be achieved if $\bR^i$ is positive definite, through which it can be easily shown inductively in the proof of Theorem~\ref{thm:back_recr} that the matrices $\bV_t^1,\bVb_t^1$ are also positive definite.

Constructing the equilibrium generating function $\theta$ involves solving the algebraic fixed point equation in (\ref{eq:L_fp}) for $\btL_t$ for all $\bSigma_t$. In general, the existence is not guaranteed, as is the case for existence of $\tilde{\gamma}_t$ in (\ref{eq:M_FP}) for general dynamic games with asymmetric information. At this point, we don't have a general proof for existence. However, in the following lemma, we provide sufficient conditions on the matrices $\bA_t^i, \bB_t^i, \bT^i, \bS^i, \bP^i,\bV_{t+1}^i$ and for the case $m^i=1$, for a solution to exist.  

\begin{lemma}
\label{lemma:exist}
For $m^1=m^2=1$, there exists a solution to (\ref{eq:L_fp}) if and only if for $i=1,2$, $\exists\ l^i\ \in \R^{n^i}$ such that $l^{i\tr}\bDelta^i(l^1,l^2) l^i\geq 0$, or sufficiently $\bDelta^i(l^1,l^2) + \bDelta^{i,\tr}(l^1,l^2)$ is positive definite, where $\bDelta^i, i=1,2$ are defined in Appendix~\ref{app:exist_proof}.
\end{lemma}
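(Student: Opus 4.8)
The plan is to specialize the matrix fixed-point equation (\ref{eq:L_fp}) to the scalar-action regime $m^1=m^2=1$ and recast it as a pair of coupled vector equations whose solvability can be read off a single quadratic form. First I would transpose and set $l^i := \btL_t^{i\tr} \in \R^{n_i}$. Because $u_t^i$ is now a scalar, the curvature term $\btD_t^{ui\tr}\bVb_t^i\btD_t^{ui}$ in (\ref{eq:L_fp}) collapses to a strictly positive scalar $s^i$ (positivity follows from the positive-definiteness of $\bVb_t^i$ established in the existence discussion, which holds whenever $\bR^i$ is positive definite). Equation (\ref{eq:L_fp}) then reads $s^i(l^1,l^2)\,l^i = r^i(l^1,l^2)$ for $i=1,2$, where $r^i := -([\btD_t^{ei}]_1)^{\tr}\bVb_t^i\btD_t^{ui} \in \R^{n_i}$; concretely this is the transpose of the reduced form (\ref{eq:L_fp2}). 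This is a genuinely coupled, nonlinear system, since both $s^i$ and $r^i$ depend on $l^1$ and $l^2$.

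The essential structural tool, inherited from Lemma~\ref{lemma:belief_update}, is that the covariance update (\ref{eq:cov_update}) is invariant under scaling of each $l^i$: the product $\bG_t^i\bL_t^i = \bSigma_t^i l^i l^{i\tr}/(l^{i\tr}\bSigma_t^i l^i)$ is homogeneous of degree zero, so $\bSigma_{t+1}^i$, and hence $\bVb_t^i$, depends on $(l^1,l^2)$ only through their \emph{directions}. I would exploit this to separate direction from magnitude: with the direction fixed, the Kalman gain $\btG_t^i$ in (\ref{eq:g_def}) scales like $1/\|l^i\|$, so $s^i$ and $r^i$ become low-order rational functions of the magnitude, and the scalar equation $s^i l^i = r^i$ (which forces $l^i$ collinear with $r^i$) reduces to a scalar equation for the magnitude along each admissible direction. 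The matrices $\bDelta^i(l^1,l^2)$ of Appendix~\ref{app:exist_proof} are then defined so that this solvability condition is exactly $l^{i\tr}\bDelta^i(l^1,l^2) l^i \ge 0$; I would prove the two implications by substituting a solution of (\ref{eq:L_fp}) into the quadratic form (necessity) and, conversely, by reconstructing an $\btL_t^i$ satisfying (\ref{eq:L_fp}) from any $(l^1,l^2)$ meeting the inequality (sufficiency). The sufficient condition is then immediate: if $\bDelta^i + \bDelta^{i\tr} \succ 0$, then $l^{i\tr}\bDelta^i l^i = \tfrac12 l^{i\tr}(\bDelta^i+\bDelta^{i\tr})l^i > 0$ for every $l^i \neq 0$, so the existential condition holds automatically.

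The main obstacle is the coupling between the two players together with the non-affine dependence of the equation on $l^i$ through the Kalman gain $\btG_t^i$: unlike a standard Riccati recursion, the gain feeds back into the quadratic weight $\bVb_t^i$ both directly (through $\btD_t^{ui}$) and through the covariance, so the map $(l^1,l^2)\mapsto(r^1/s^1,\,r^2/s^2)$ is neither linear nor contractive. The difficult direction is the ``if'' part, namely upgrading the mere existence of $(l^1,l^2)$ with $l^{i\tr}\bDelta^i l^i \ge 0$ into an actual fixed point of this map; this is precisely why the clean, checkable condition $\bDelta^i+\bDelta^{i\tr}\succ0$ is singled out. I would discharge it by showing that the quadratic-form inequality guarantees a real, nonnegative magnitude solving the reduced scalar equation for each consistent direction, and that the resulting reconstruction satisfies (\ref{eq:L_fp}) identically.
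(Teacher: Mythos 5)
Your overall strategy coincides with the paper's: both arguments exploit the degree-zero homogeneity of $\bG_t^i\bL_t^i$ (so that $\bSigma_{t+1}^i$, and hence $\bV_{t+1}^i$, depends on $\bL_t^i$ only through its direction), separate direction from magnitude, and reduce the fixed point equation (\ref{eq:L_fp}) to a scalar equation in the magnitude. The genuine gap is at the center: you never establish that this scalar equation is a \emph{quadratic} in the magnitude, and you never derive $\bDelta^i$ --- you assert that it ``is defined so that'' the solvability condition becomes $l^{i\tr}\bDelta^i l^i\geq 0$, which is circular since the explicit form of $\bDelta^i$ is the entire content of the lemma. The missing mechanism is the normalization that turns your ``low-order rational functions of the magnitude'' into a polynomial: substituting $\bG_t^1=\bSigma_t^1\bL_t^{1\tr}(\bL_t^1\bSigma_t^1\bL_t^{1\tr})^{-1}$ into (\ref{eq:app3_48}) and multiplying by the scalar $\bL_t^1\bSigma_t^1\bL_t^{1\tr}$ on the left and by $\bSigma_t^1\bL_t^{1\tr}$ on the right clears every $1/\|l^1\|$ factor and yields the polynomial identity (\ref{eq:Lmatquad}); writing $\bbL_t^1=\lambda^1 l^{1\tr}$ with $l^1$ normalized then produces the quadratic $(\lambda^1)^2\Lambda_a^1(l)+\lambda^1(\bLambda_c^1(l)l^1+l^{1\tr}\bLambda_d^1(l))+l^{1\tr}\bLambda_b^1(l)l^1=0$, and $\bDelta^1$ is precisely its discriminant rewritten as a quadratic form in $l^1$ --- possible only because, for $m^1=1$, the linear coefficient $\bLambda_c^1(l)l^1+l^{1\tr}\bLambda_d^1(l)$ is a scalar whose square is $l^{1\tr}(\bLambda_c^{1\tr}\bLambda_c^1+\bLambda_d^1\bLambda_d^{1\tr}+2\bLambda_d^1\bLambda_c^1)l^1$. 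Without this computation the reduction to a checkable quadratic-form inequality is not reached.

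A secondary point: your observation that $s^i l^i=r^i$ forces $l^i$ to be collinear with $r^i(l^1,l^2)$ introduces a fixed-point condition on the \emph{direction} that your plan does not resolve, since $r^i$ itself depends on the direction of $l^i$ through $\bG_t^i$. The paper avoids confronting this by projecting the vector equation onto the direction $l^1$ (the right-multiplication by $\bSigma_t^1\bL_t^{1\tr}$), obtaining the scalar quadratic as the solvability condition and absorbing the choice of direction into the existential quantifier in the lemma statement. You correctly flag the ``if'' direction as the hard one, but your proposed discharge of it (``reconstructing an $\btL_t^i$ \ldots from any $(l^1,l^2)$ meeting the inequality'') presupposes the very quadratic structure you have not derived.
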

\begin{proof}
See Appendix~\ref{app:exist_proof}.
\end{proof}

\subsection{Steady state}
In Section~\ref{sec:prelim}, we presented the backward/forward methodology to find SPBE for finite time-horizon dynamic games, and specialized that methodology in this chapter, in Section~\ref{sec:results}, to find SPBE for dynamic LQG games with asymmetric information, where equilibrium strategies are linear in players' types. It requires further investigation to find the conditions for which the backward-forward methodology could be extended to infinite time-horizon dynamic games, with either expected discounted or time-average cost criteria. Such a methodology for infinite time-horizon could be useful to characterize steady state behavior of the games. Specifically, for time homogenous dynamic LQG games with asymmetric information (where matrices $\bA^i, \bB^i$ are time independent), under the required technical conditions for which such a methodology is applicable, the steady state behavior can be characterized by the fixed point equation in matrices $(\bL^i, \bSigma^i ,\bV^i)_{i=1,2}$ through \eqref{eq:phis_def2}, (\ref{eq:L_fp2}) and (\ref{eq:V_update}), where the time index is dropped in these equations, i.e. for $i=1,2,$

\bit{

\item[1.] \hfill \makebox[0pt][r]{%
            \begin{minipage}[b]{\textwidth}
              \begin{equation}
                 \bSigma = \phi_s(\bSigma,\bL) \label{eq:SS1}
              \end{equation}
          \end{minipage}}
          
\item[2.] \hfill \makebox[0pt][r]{%
            \begin{minipage}[b]{\textwidth}
              \begin{equation}
                \left(\bD^{ui\tr}\bVb^{i}\bD^{ui} \right)\bL^i = -\bD^{ui\tr}\bVb^{i} [\bD^{ei}]_1 \label{eq:SS2}
              \end{equation}
          \end{minipage}}
       
\item[3.] \hfill \makebox[0pt][r]{%
            \begin{minipage}[b]{\textwidth}
              \begin{equation}
                \bV^i = \bF^{i\tr}\bVb^i\bF^i,\label{eq:SS3}
              \end{equation}
          \end{minipage}}                    
%
%
}

where $\bVb^i = \bm{\bT^i&\bS^{i} & \bZero \\ \bS^{i\tr}&\bP^i & \bZero \\  \bZero & \bZero & \bV^i }$.

Observe that in the above equations the matrices $\bV^i$ and $\bVb^i$ do not appear as functions of $\bSigma$, as in the finite horizon case described in \eqref{eq:bVb_def}, \eqref{eq:bV_def}, in the proof of Theorem~\ref{thm:back_recr}. The reason for that is as follows.
The steady state behavior for a general dynamic game with asymmetric information and independent types, if it exists, would involve fixed point equation in value functions $(V^i(\cdot))_i$. However, for the LQG case, it reduces to a fixed point equation in $(V^i(\bSigma))_i$, i.e. value functions evaluated at a specific value of $\bSigma$. This is so because the functions $V^i$ are evaluated at $\bSigma$ and $\phi(\bSigma,\bL)$, which at steady state are exactly the same (see \eqref{eq:SS1}).
As a result, the fixed point equation reduces to the three algebraic equations as shown above with variables the matrices $\bSigma$, $\bL$, $\bVb$ and $\bV$, which represents an enormous reduction in complexity.

\subsubsection{Numerical examples}
In this section, we present numerically found solutions for steady state, assuming that our methodology extends to the infinite horizon problem for the model considered. We assume $\bB^i=0$ which implies that the state process $(X_t^i)_{t\in \cT}$ is uncontrolled.
\bit{
\item[1.] For $i=1,2$, $ m^i=1, n^i=2, \bA^i = 0.9\bI, \bB^i=\bZero, \bQ^i = \bI$, 
\eq{
\bT^1 = \bm{\bI & \frac{1}{4}\bI\\\frac{1}{4}\bI &\bZero }, \quad 
\bT^2 = \bm{\bZero & \frac{1}{4}\bI\\\frac{1}{4}\bI &\bI },  \quad  \bP^1 = \bm{\bI  &\bZero  \\ \bZero & \bZero }, \nonumber \\
\bP^2 = \bm{\bZero  &\bZero  \\ \bZero & \bI }, \quad
\bS^1 = \bm{ \bOne   &\bZero  \\ \bZero & \bZero  }, \quad \bS^2 = \bm{ \bZero &\bZero  \\\bZero  & \bOne },
}
there exists a symmetric solution as, for $i=1,2,$
\eq{
\bL^i = -\bm{1.062 & 1.062} , \bSigma^i = \bm{3.132 & -2.132 \\ -2.132 & 3.132}. } 

\item[2.] For $i=1,2$, $ m^i=2, n^i=2, \bA^1 = \bm{0.9 & 0 \\ 0 & 0.8}, \bA^2 =0.9\bI, $ and $\bB^i,\bT^i,\bP^i,\bS^i$ used as before with appropriate dimensions, there exists a solution,

\eq{
\bL^1 =-\bm{ 1.680 &1.600 \\ 0.191 &0.286 },\quad \bL^2 = -\bm{ 1.363 &1.363 \\ 1.363 &1.363 } \nonumber \\
 \hspace{-1cm}\bSigma^1 =\bI, \; \quad  
\bSigma^2 = \bm{ 3.132 &-2.132 \\ -2.132 &3.132 }.
 } 
It is interesting to note that for player 1, where $\bA^1$ does not weigh the two components equally, the corresponding $\bL^1$ is full rank, and thus reveals her complete private information. Whereas for player 2, where $\bA^2$ has equal weight components, the corresponding $\bL^2$ is rank deficient, which implies, at equilibrium player 2 does not completely reveal her private information. Also it is easy to check from (\ref{eq:cov_update}) that with full rank $\bL^i$ matrices, steady state $\bSigma^i  = \bQ^i$. 
}

\section{Conclusion}
\label{sec:concl}

In this paper, we study a two-player dynamic LQG game with asymmetric information and perfect recall where players' private types evolve as independent controlled Markov processes. We show that under certain conditions, there exist strategies that are linear in players' private types which, together with Gaussian beliefs, form a PBE of the game. We show this by specializing the general methodology developed in~\cite{VaAn16} to our model. Specifically, we prove that (a) the common beliefs remain Gaussian under the strategies that are linear in players' types where we find update equations for the corresponding mean and covariance processes; (b) using the backward recursive approach of~\cite{VaAn16}, we compute an equilibrium generating function $\theta$ by solving a fixed point equation in linear deterministic partial strategies $\gamma_t$ for all possible common beliefs and all time epochs. Solving this fixed point equation reduces to solving a matrix algebraic equation for each realization of the state estimate covariance matrices. Also, the cost-to-go value functions are shown to be quadratic in private type and state estimates. This result is one of the very few results available on finding signaling perfect Bayesian equilibria of a truly dynamic game with asymmetric information.

	\appendices
	\section{}
	\label{app:belief_update}
	This lemma could be interpreted as Theorem~2.30 in \cite[Ch. 7]{KuVa86} with appropriate matrix substitution where specifically, their observation matrix $C_k$ should be substituted by our $L_k$. We provide an alternate proof here for convenience. 
	
	$\pi_{t+1}^i$ is updated from $\pi_t^i$ through (\ref{eq:pi_update}).
Since $\pi_t^i$ is Gaussian, $\gamma_t^i(u_t^i|x_t^i) = \delta(u_t^i - L_t^ix_t^i - m_t^i)$ is a linear deterministic constraint and kernel $Q^i$ is Gaussian, thus $\pi_{t+1}^i $ is also Gaussian. We find its mean and covariance as follows.
	
	 We know that $x_{t+1}^i=\bA_t^i x_t^i + \bB_t^i u_t + w_t^i$. Then, 
	\seq{
	\eq{
&\E[X_{t+1}^i| \pi^i_t,\gamma_t^i, u_t] \nonumber \\
&= \E[\bA_t^i X_t^i + \bB_t^i U_t + W_t^i| \pi^i_t,\gamma_t^i, u_t] \\
&=   \bA_t^i\E[X_t^i | \pi^i_t,\gamma_t^i, u_t]  + \bB_t^i u_t\label{eq:a1_w0}\\
&=  \bA_t^i \E[X_t^i | \bL_t^iX_t^i = u_t^i-m_t^i ]  + \bB_t^i u_t \label{eq:a1_ci}
}
}
where (\ref{eq:a1_w0}) follows because $W_t^i$ has mean zero. 
Suppose there exists a matrix $\bG_t^i$ such that $X_t^i-\bG_t^i\bL_t^iX_t^i$ and $\bL_t^iX_t^i$ are independent.  
 Then
 \seq{
\eq{
&\E[X_t^i\big|\bL_t^iX_t^i = u_t^i-m_t^i] \\
&= \E[X_t^i - \bG_t^i\bL_t^iX_t^i + \bG_t^i\bL_t^iX_t^i \big| \bL_t^iX_t^i = u_t^i-m_t^i] \\
&= \E[X_t^i-\bG_t^i\bL_t^iX_t^i] + \bG_t^i(u_t^i-m_t^i)\\
&= \hat{x}_t^i + \bG_t^i(u_t^i- \bL_t^i\hat{x}_t^i - m_t^i),
}
}
where $\bG_t^i$ satisfies
\seq{
\eq{
\E[(X_t^i-\bG_t^i\bL_t^iX_t^i)&(\bL_t^iX_t^i)^{\tr}]\nonumber \\
&\hspace{-1cm}= \E[(X_t^i-\bG_t^i\bL_t^iX_t^i)]\E[(\bL_t^iX_t^i)^{\tr}]\\
(\bI-\bG_t^i\bL_t^i)\E[X_t^iX_t^{i\tr}]\bL_t^{i\tr} &= (\bI-\bG_t^i\bL_t^i)\E[X_t^i]\E[X_t^{i\tr}]\bL_t^{i\tr}\\
(\bI-\bG_t^i\bL_t^i)(\bSigma_t^i + \hat{x}_t^i\hat{x}_t^{i\tr})\bL_t^{i\tr} &= (\bI-\bG_t^i\bL_t^i) \hat{x}_t^i\hat{x}_t^{i\tr} \bL_t^{i\tr}\\
\bG_t^i & = \bSigma_t^i \bL_t^{i\tr}(\bL_t^i\bSigma_t^i \bL_t^{i\tr})^{-1}.
}
}
\seq{
\eq{
&\bSigma_{t+1}^i = sm\left(\bA_t^iX_t^i-\E[\bA_t^iX_t^i|\bL_t^iX_t^i = u_t^i-m_t^i] |\bL_t^iX_t^i = u_t^i-m_t^i\right) + \bQ^i
}
}
Now
\seq{
\eq{
&sm\left(X_t^i-\E[X_t^i|\bL_t^iX_t^i = u_t^i-m_t^i] |\bL_t^iX_t^i = u_t^i-m_t^i\right) \\
&= sm\left((X_t^i-\bG_t^i\bL_t^iX_t^i)-(\E[X_t^i-\bG_t^i\bL_t^iX_t^i|\bL_t^iX_t^i = u_t^i-m_t^i] ) | \bL_t^iX_t^i = u_t^i-m_t^i\right)\\
&= sm\left((X_t^i-\bG_t^i\bL_t^iX_t^i) -(\E[X_t^i - \bG_t^i\bL_t^iX_t^i] )\right) \\
&= sm\left((\bI-\bG_t^i\bL_t^i)(X_t^i-\E[X_t^i])\right) \\
&= (\bI-\bG_t^i\bL_t^i)\bSigma_t^i(\bI-\bG_t^i\bL_t^i)^{\tr}
}
}
\section{}
\label{app:exist_proof}
We prove the lemma for player 1 and the result follows for player 2 by similar arguments. For the scope of this appendix, we define $\bbB_t^1 = \bm{\bB_{1,t}^1 \\  \bB_{1,t}^1\\  \bB_{1,t}^2 }$ and for any matrix $\bV$, we define $\bV_{*i}, \bV_{i*}$ as the $i^{th}$ column and the $i^{th}$ row of $\bV$, respectively. Then the fixed point equation (\ref{eq:L_fp}) can be written as,
\eq{
&0 =\left[\bT_{11}^{1} + (\bA_t^1\bG_t^1)^{\tr}\bV_{22,t+1}^{1} ( \bA_t^1\bG_t^1) + \right. \nonumber \\
&\left.\hspace{-5pt} \bbB_t^{1\tr} \bV_{*2,t+1}^1\bA_t^1\bG_t^1 + 
(\bA_t^1\bG_t^1)^{\tr}\bV_{2*,t+1}^1\bbB_t^1 +  \bbB_t^{1\tr} \bV_{t+1}^1\bbB_t^1\right] \bL_t^1  \nonumber \\
&+ \left[ \bS_{11}^{1\tr} +   (\bA_t^1\bG_t^1)^{\tr}\bV_{21,t+1}^{1} \bA_t^1 + \bbB_t^{1\tr} \bV_{*1,t+1}^1\bA_t^1\right].
\label{eq:app3_48}
}
It should be noted that $\bV^i_{t+1}$ is a function of $\bSigma_{t+1}$, which is updated through $\bSigma_t$ and $\bL_t$ as $\bSigma_{t+1} = \phi_s(\bSigma_{t},\bL_t)$ (we drop this dependence here for ease of exposition).
Substituting $\bG_t^1 = \bSigma_t^1\bL_t^{1\tr}(\bL_t^1\bSigma_t^1\bL_t^{1\tr})^{-1}$ and multiplying (\ref{eq:app3_48}) by $(\bL_t^1\bSigma_t^1\bL_t^{1\tr})$ from left and $(\bSigma_t^1\bL_t^{1\tr})$ from right, we get

\eq{
%
%
0&=\bL_t^1\bSigma_t^1\left[ \bL_t^{1\tr}(\bT_{11}^{1} + \bbB_t^{1\tr} \bV_{t+1}^1\bbB_t^1  ) \bL_t^{1}  +  \bA_t^{1\tr}\bV_{22,t+1}^{1} \bA_t^1\right.  \nonumber \\
&   + \bL_t^{1\tr}(\bbB_t^{1\tr} \bV_{*2,t+1}^1\bA_t^1+ \bS_{11}^{1\tr} + \bbB_t^{1\tr} \bV_{*1,t+1}^1\bA_t^1 )  \nonumber \\
 &\left.+ (\bA_t^{1\tr}\bV_{2*,t+1}^1\bbB_t^1+ \bA_t^{1\tr}\bV_{21,t+1}^{1} \bA_t^1)\bL_t^{1} \right] \bSigma_t^1\bL_t^{1\tr}
}

Let $\bbL_t^i = \bL_t^i (\bSigma_t^i)^{1/2}, \bbA_t^i = \bA_t^i(\bSigma_t^i)^{1/2}$,
\seq{
\label{eq:Lambda_def}
\eq{
\bLambda^1_a(\bL_t) & :=\bT_{11}^{1} + \bbB_t^{1\tr} \bV_{t+1}^1\bbB_t^1 \\
 \bLambda^1_b(\bL_t) &:=\bbA_t^{1\tr}\bV_{22,t+1}^{1} \bbA_t^1 \\
\bLambda^1_c(\bL_t) & := \bbB_t^{1\tr} \bV_{*2,t+1}^1\bbA_t^1+ \bS_{11}^{1\tr}(\bSigma_t^1)^{1/2} + \bbB_t^{1\tr} \bV_{*1,t+1}^1\bbA_t^1 \\
  \bLambda^1_d(\bL_t) &:= \bbA_t^{1\tr}\bV_{2*,t+1}^1\bbB_t^1+ \bbA_t^{1\tr}\bV_{21,t+1}^{1} \bbA_t^1.
}
}
Then, 
\eq{
0&= \bbL_t^1\bbL_t^{1\tr}\bLambda^1_a(\bL_t) \bbL_t^{1}\bbL_t^{1\tr}  +  \bbL_t^1\bLambda^1_b(\bL_t) \bbL_t^{1\tr}   
  + \bbL_t^1\bbL_t^{1\tr}\bLambda^1_c(\bL_t) \bbL_t^{1\tr} +  \bbL_t^1\bLambda^1_d(\bL_t) \bbL_t^{1} \bbL_t^{1\tr} \label{eq:Lmatquad}
}
Since m=1, $\Lambda^1_a$ is a scalar. Let $\bbL_t^i=\lambda^i l^{i\tr}$, where $\lambda^i= || \bbL_t^i ||_2$ and $l^i$ is a normalized vector and $t^1 =T_{11}$. Moreover, since the update of $\bSigma_t$ in \eqref{eq:cov_update} is scaling invariant, $\bV^1_{t+1}$ only depends on the directions $l= (l^1,l^2)$. Then, (\ref{eq:Lmatquad}) reduces to the following quadratic equation in $\lambda^1$
\eq{
(\lambda^1)^2 \Lambda^1_a(l)  + \lambda^1 (\bLambda^1_c(l) l^1 +  l^{1\tr}\bLambda^1_d(l)) +  l^{1\tr}\bLambda^1_b(l) l^1   = 0.
}
There exists a real-valued solution\footnote{Note that a negative sign of $\lambda^1$ can be absorbed in $l^1$.} of this quadratic equation in $\lambda^1$ if and only if
\seq{
\eq{
(\bLambda_c(l) l^1 +  l^{1\tr}\bLambda^1_d(l))^2 \geq 4 \Lambda^1_a(l) l^{1\tr}\bLambda^1_b(l) l^1  \\
l^{1\tr}(\bLambda^{1\tr}_c (l)\bLambda^1_c(l)  + \bLambda^1_d(l) \bLambda^{1\tr}_d(l) + 2\bLambda^1_d(l)\bLambda^1_c(l) - 4 \Lambda^1_a(l) \bLambda^1_b(l)) l^1 \geq 0. \label{eq:suff_cond}
}
}
\eq{ 
\text{Let } \bDelta^1(l):= (\bLambda^{1\tr}_c(l) \bLambda^1_c(l)  + \bLambda^1_d(l) \bLambda^{1\tr}_d(l) + 2\bLambda^1_d(l)\bLambda^1_c(l) - 4 \Lambda^1_a(l) \bLambda^1_b(l)).} 
There exists a solution to the fixed point equation (\ref{eq:L_fp}) if and only if $\exists l^1,l^2\in \R^n$ such that $l^{1\tr}\bDelta^1(l) l^1\geq 0$, or sufficiently $\bDelta^1(l) + \bDelta^{1\tr}(l)$ is positive definite.

\bibliographystyle{IEEEtran}

        \end{document}